\newcommand{\MC}{\texttt{MAX-CUT}}
\begin{document}
\title{Warm-Started QAOA with Aligned Mixers Converges Slowly Near the Poles of the Bloch Sphere}
%\title{On the Convergence Rates of Warm-Started QAOA with Aligned Mixers as Qubits Approach the Poles of the Bloch Sphere}
%
\titlerunning{Warm-Started QAOA Converges Slowly Near the Poles of the Bloch Sphere}
% If the paper title is too long for the running head, you can set
% an abbreviated paper title here
%
\author{Reuben Tate\orcidID{0000-0002-9170-8906} \and
Stephan Eidenbenz\orcidID{0000-0002-2628-1854} }
\authorrunning{R. Tate, S. Eidenbenz}
% First names are abbreviated in the running head.
% If there are more than two authors, 'et al.' is used.
%
\institute{CCS-3: Information Sciences, Los Alamos National Laboratory, USA 
 \\
\email{\{rtate, eidenben\}@lanl.gov}}
\maketitle              % typeset the header of the contribution
\begin{abstract}
In order to boost the performance of the Quantum Approximate Optimization Algorithm (QAOA) to solve problems in combinatorial optimization, researchers have leveraged the solutions returned from classical algorithms in order to create a warm-started quantum initial state for QAOA that is biased towards "good" solutions. Cain et al. showed that if the classically-obtained solutions are mapped to the poles of the Bloch sphere, then vanilla QAOA with the standard mixer "gets stuck". If the classically-obtained solution is instead mapped to within some angle $\theta$ from the poles of the Bloch sphere, creating an initial product state, then QAOA with optimal variational parameters is known to converge to the optimal solution with increased circuit depth if the mixer is modified to be "aligned" with the warm-start initial state. Leveraging recent work of Benchasattabuse et al., we provide theoretical lower bounds on the circuit depth necessary for this form of warm-started QAOA to achieve a desired change $\Delta \lambda$ in approximation ratio; in particular, we show that for small $\theta$, the lower bound on the circuit depth roughly scales proportionally with $\Delta \lambda/\theta$. 

\keywords{Quantum Computing  \and Combinatorial Optimization \and Quantum Approximate Optimization Algorithm.}
\end{abstract}
\section{Introduction}

The Quantum Approximate Optimization Algorithm (QAOA) \cite{farhi2014quantum} and its greatly generalized sibling Quantum Alternating Operator Ansatz (also QAOA) \cite{HWORVB17} is a leading quantum algorithm or perhaps a quantum heuristic to find approximate solutions to combinatorial optimization problems, such as Traveling Salesperson, Satisfiability, or -- and this is the most commonly studied case -- Maximum Cut. While a diversity of QAOA variations have emerged over the past decade \cite{blekos2024review}, the concept of using a quantum state representation of a high-quality classically obtained approximate solution as starting state for QAOA is our focus for analysis. More precisely, we study a case where the classically-obtained initial solution is mapped to within some angle $\theta$ from the poles of the Bloch sphere, creating an initial product state; this QAOA variant with optimal variational parameters is known to converge to the optimal solution with increased circuit depth if the mixer is modified to be ``aligned" with the warm-start initial state.

Provable approximation ratio results for QAOA are still very rare. Our work focuses on the effect that the parameter $\theta$ has on the behavior of the QAOA circuit, thus giving new insights into the theory of QAOAs. More specifically, we provide $\theta$-dependent lower bounds on the required circuit depth needed to achieve a desired change in approximation ratio. The key takeaway is the following: \emph{In the context of warm-started QAOA (with ``aligned" mixers), one should avoid initializing the warm-start very close to poles of the Bloch sphere (i.e. $\theta$ near 0).} Some preliminary numerical simulations \cite{Tate2023warmstartedqaoa,egger2021warm} had already suggested that initializing near the poles was not ideal; this work solidifies such a result with a theoretical backing.  While our results do not show a worst-case advantage of warm-start QAOA over either the traditional QAOA approach (e.g. approximation ratio of $0.6924$ for 3-regular \MC{} \cite{farhi2014quantum}), 
%or the quality of a good starting state (such as in for cases (ii) and (iii)), 
the practical advantage of warm-started QAOA in NISQ and numerical experiments should be explored further.

This paper is organized as follows: in Section \ref{sec:notationBackground}, we set up the necessary notation for this work and briefly review the standard QAOA algorithm, in Section \ref{sec:initialStates}, we give a more detailed description of the initial warm-start used and the corresponding ``aligned" mixer, in Section \ref{sec:toyProblem}, we consider an illustrative simple toy problem that provides a geometric intuition for why one should expect initializations near the poles to perform poorly, and in Section \ref{sec:boundsOnRequiredDepth}, we provide a theoretical analysis that shows that such an intuition holds more generally as well by proving lower bounds on the circuit depth. We provide a discussion and conclude in Section \ref{sec:discussionConclusion}.

\section{Notation and Background}
\label{sec:notationBackground}
For the purposes of this work, we assume throughout that we are working with some classical objective function $c: \{0,1\}^n \to \mathbb{Z}^{\geq 0}$ to be maximized, which maps bitstrings of a particular length to a non-negative integer. We also assume that $c$ is not identically zero everywhere; together with non-negativity, this ensures that $c_\text{max} := \max_{x \in \{0,1\}^n} c(x) > 0$. Let $\mathcal{A}$ be some (possibly randomized) algorithm and let $\mathcal{A}(c)$ be the (expected) objective value obtained by running $\mathcal{A}$ with an instance corresponding to objective function $c$, we then define the corresponding (instance-specific) approximation ratio\footnote{In the classical optimization literature, the term \emph{approximation ratio} is often only used to refer to \emph{worst-case} ratio (of the expected algorithm objective value to the optimal objective value) amongst some class of objective functions. In recent years, the term has also been used to refer to the ratio for individual instances, which is how the term will be used in this work.} as, $\lambda_\mathcal{A}(c) = \frac{\mathcal{A}(c)}{c_\text{max}}$; when the context is clear, we will often write $\lambda_\mathcal{A}(c)$ as $\lambda_\mathcal{A}$ or even just $\lambda$.

We will often use the \MC{} problem as an example which is as follows: given a graph $G=(V,E)$, partition the vertices $V$ into two groups, $S$ and $V \setminus S$, so that the number of edges between the groups is maximized. If $|V|=n$, the partitioning of the vertices can be expressed as a bitstring $x$ whose $i$th bit is 0 if the $i$th vertex is in $S$ and 1 otherwise. For any fixed choice of graph $G$, the corresponding objective function for this problem is: $c(x) = \sum_{(i,j) \in E} \frac{1}{2}(1-z_iz_j)$ where $z_i = 2x_i-1 \in \{-1,+1\}$ for all $i$.
\subsection{Standard QAOA}
\label{sec:QAOA}
We next review the QAOA algorithm and set up the needed notation that will be used throughout this work. We use $X,Y,Z$ to denote the standard Pauli matrices.
For a multi-qubit system, we use $X_j, Y_j, Z_j$ to denote the operation of applying $X,Y,$ or $Z$ to the $j$th qubit respectively. We use $I$ and $\mathbf{0}$ to denote the identity and all-zeros matrix respectively; the dimensions of such matrices will be clear from context. For any square matrix $M$ and scalar $t$, we define the corresponding matrix
$U(M, c) := e^{-itM}$.
%\stephan{Overloading of $c$, ie its a scalar above and now its the cost function ... tempted to suggest to replace it with $t$ above.}\tate{Good catch, I think $t$ works.} 
For a combinatorial optimization problem determined by a cost function $c: \{0,1\}^n \to \mathbb{Z}^{\geq 0}$ on $n$-length bitstrings, we define the corresponding cost Hamiltonian as the matrix $C$ such that $C\ket{b} = c(b)\ket{b}$.

Given a classical objective function $c$ on $n$-length bitstrings (with corresponding cost Hamiltonian $C$), a Hermitian matrix $B$ (called the mixing Hamiltonian) of appropriate size, and initial quantum state $\ket{\psi_i}$ in a $2^n$-dimensional Hilbert space, a circuit depth $p$, and variational parameters $\gamma = (\gamma_1, \dots, \gamma_p), \beta = (\beta_1, \dots, \beta_p)$, we define the following variational waveform of depth-$p$ QAOA as follows:
\begin{equation}\label{eqn:waveform}\ket{\psi_p(\gamma,\beta)} := U(B, \beta_p)U(C, \gamma_p) \cdots U(B, \beta_1)U(C, \gamma_1)\ket{\psi_i}.\end{equation}

% For many optimization problems, the cost Hamiltonian $C$ can be compactly expressed as a sum of terms with each term acting on a small number of qubits. For the \MC{} problem, given a graph $G = (V,E)$, the cost Hamiltonian can be written as,
% $$C_G = \sum_{e \in E} C_e,$$
% where,
% \begin{equation}\label{eqn: costSingleEdge}C_e = \frac{1}{2}(I - Z_iZ_j).\end{equation}

For unconstrained optimization problems, the mixing Hamiltonian $B$ is usually taken to be the transverse field mixer, which, for an $n$-qubit system, is defined as
\begin{equation}B_\text{TF} = \sum_{j=1}^n X_j.\end{equation}
Additionally, the starting state is usually taken to be an equal superposition of all $2^n$ bitstrings of length $n$:
\begin{equation}\ket{\psi_i} = \ket{+}^{\otimes n} = \frac{1}{\sqrt{2^n}}\sum_{b \in \{0,1\}^n} \ket{b}.\end{equation}

In the case where $\ket{\psi_i}$ is the most-excited state of $H_B$, there exists a choice of angles $\gamma$ and $\beta$ for which the QAOA circuit can be viewed as a Trotterization of the Quantum Adiabatic Algorithm which is known to, under mild assumptions (see \cite{farhi2014quantum,binkowski2023elementary}), converge to the optimal solution given enough time; in other words, for a cost function $c$ that we wish to maximize, we have that:
$$\lim_{p \to \infty} \max_{\gamma,\beta} \Big[ \bra{\psi_p(\gamma,\beta)}C\ket{\psi_p(\gamma,\beta)}\Big] =  \max_{x \in \{0,1\}^n} c(x),$$
where $\bra{\psi_p(\gamma,\beta)}C\ket{\psi_p(\gamma,\beta)}$ is the expected cost value obtained from measuring the output state of QAOA.

%\stephan{Perhaps a somewhat prelim assessment, but this level of mathematical accuracy suggest that this is a theory paper that should go to a significant cs/or theory journal, perhaps SICOMP or even ACM Algorithms, but the special issue is ok, too.}

\section{Initial Product States and Aligned Mixers}
\label{sec:initialStates}
For the standard QAOA algorithm \cite{farhi2014quantum} and many of its variants, the equal superposition $\ket{\psi_i} = \ket{+}^{\otimes n}$ is used as the initial state. In the context of \MC{} for example, quantum measurement of $\ket{+}^{\otimes n}$ produces a uniform distribution of all $2^n$ cuts in the graph; put another way, each vertex, independent of the other vertices, has probability $1/2$ of being on one side of the cut or the other.

However, one can consider modifying the QAOA algorithm by using a different initial state $\ket{\psi_i}$. Often, such initial states are constructed as a function of \emph{classically} obtained solutions. This method is parametrized by a parameter $\theta$ which we refer to as the \emph{initialization angle}; to this end, we first introduce some helpful notation.

\subsection{Construction of Warm-Started States}
\label{sec:constructionOfWarmStartedStates}
Let $\vec{n} = (x,y,z)$ be a unit vector written in Cartesian coordinates. We let $\ket{\vec{n}}$ denote the single-qubit quantum state whose qubit position on the Bloch sphere is $\vec{n}$. For $\vec{n}=(x,y,z)$, we define the following single-qubit operation:
\begin{equation}B_{\vec{n}} = xX+yY+zZ,\end{equation}
and let $B_{\vec{n},j}$ denote the operation of applying the operation $B_{\vec{n}}$ on the $j$th qubit. The unitary $U(B_{\vec{n},j}, \beta)$ can be geometrically interpreted as a single-qubit rotation by angle $2\beta$ about the axis that points in the $\vec{n}$ direction \cite{blochSphereRotations}.

For an $n$-qubit product state $\ket{s} = \bigotimes_{j=1}^n \ket{\vec{n}_j}$, we define an $n$-qubit operation $B_{\ket{s}}$ in terms of the single-qubit operations above:
\begin{equation}B_{\ket{s}} =  \sum_{j=1}^n B_{\vec{n_j}, j}.\end{equation}

When $\ket{s}$ is a product state, one can show that $\ket{s}$ can be prepared and that $B_{\ket{s}}$ can be implemented in most quantum devices with a constant-depth circuit using standard single-qubit rotation gates about the $x,y,$ and $z$ axes. Moreover, they \cite{Tate2023warmstartedqaoa} remark that $\ket{s}$ is a ground state of $B_{\ket{s}}$ for any product state $\ket{s}$. Thus, as remarked in \cite{Tate2023warmstartedqaoa}, as long as none of the qubits of the initial product state $\ket{\psi_i}$ are initialized at the poles, running QAOA with the standard phase separator $C$ and mixer $B_{\ket{\psi_i}}$ will yield the optimal solution as the circuit depth goes to infinity (assuming that $\gamma$ and $\beta$ are chosen optimally) \cite{Tate2023warmstartedqaoa}. In general, whenever the initial state of QAOA is the ground state of the mixer, we say that the mixer is \emph{aligned} with the initial state, and refer to this category of QAOA variants as \emph{QAOA with aligned mixers}.

% Extending Equations \ref{eqn:waveform} and \ref{eqn:expectedCost}, we next  define the output state and expected cost of QAOA with aligned mixers as a function of the initial state:
% \begin{equation}\label{eqn:warmStartedWaveform}\ket{\psi_p(\gamma,\beta,\ket{s})} = U(B_{\ket{s}}, \beta_p)U(C, \gamma_p) \cdots U(B_{\ket{s}}, \beta_1)U(C, \gamma_1)\ket{s},\end{equation}
% and
% $$F^{(p)}(\gamma,\beta, \ket{s}) =\bra{\psi_p(\gamma,\beta,\ket{s})}C\ket{\psi_p(\gamma,\beta,\ket{s})}.$$

% Similar to Equation \ref{eqn:expectedCostAtOptimalParams}, we also define the expected cost with optimal choice of variational parameters as a function of $\ket{s}$:
% $$M_p(\ket{s}) = \max_{\gamma,\beta} F^{(p)}(\gamma,\beta, \ket{s}).$$

Prior approaches for warm-started QAOA considered initial states of the form $\ket{\psi_i} = \bigotimes_{j=1}^n \ket{\vec{n}_j}$ where $\vec{n}_1,\dots, \vec{n}_n$ are obtained by some classical procedure. In the work by Egger et al. \cite{egger2021warm}, for problems whose corresponding QUBO (Quadratic Unconstrained Binary Optimization) formulation satisfies certain properties, they solve a relaxation of the QUBO and map the solutions to states in,
\begin{equation}\label{eqn:arc}\textbf{Arc} = \{\cos(\theta/2)\ket{0}+\sin(\theta/2)\ket{1} : \theta \in (0, \pi)\},\end{equation}
i.e., points on the Bloch sphere that intersect with the $xz$-plane with non-negative $x$ coordinate; the blue arc in Figure \ref{fig:singleCutInitialization} corresponds to the possible qubit positions. For the \MC{} problem, Tate et al. \cite{Tate2023warmstartedqaoa,tate2023bridging} consider higher-dimensional relaxations of the Max-Cut problem (i.e. the Burer-Monteiro relaxation and the relaxation used in the Goemans-Williamson algorithm) which, after a possible projection, yield points all over the surface of the Bloch sphere; however, any advantages this method has over others cannot be attributed to simply utilizing more of the Bloch sphere's surface as seen in the following remark.

\begin{remark}
\label{remark:zeroPhase}
    Tate et al. \cite{Tate2023warmstartedqaoa} show that for every initial product state $\ket{\psi_i} = \bigotimes_{j=1}^n \ket{\vec{n}_j}$, there exists a different initial product state $\ket{\psi_i'} = \bigotimes_{j=1}^n \ket{\vec{n}_j'}$ with $\vec{n}_j' \in \textbf{Arc}$ for all $j$, such that, up to a global phase, QAOA with aligned mixers and initial state $\ket{\psi_i}$ returns the same state as QAOA with aligned mixers and initial state $\ket{\psi_i'}$. This property holds for any optimization problem and corresponding cost Hamiltonian $C$. More specifically, if $\ket{\vec{n}_j} = \cos(\theta_j/2)\ket{0} + e^{i\phi_j}\sin(\theta_j/2)\ket{1}$ is an arbitrary qubit on the Bloch sphere with polar angle $0 \leq \theta_j \leq \pi$ and azimuthal angle $0 \leq \phi_j < 2\pi$, then one can choose $\ket{\vec{n}_j'} = \cos(\theta_j/2)\ket{0} + \sin(\theta_j/2)\ket{1}$.
\end{remark}

We now explicitly define the warm-starts used in this work. First, for any product state $\ket{s} = \bigotimes_{j=1}^n \ket{s_j}$, we define $\theta_j$ and $\phi_j$ to be the polar and azimuthal angle of $\ket{s_j}$ on the Bloch sphere, i.e., $\ket{s_j} = \cos(\theta/2)\ket{0}+e^{i\phi}\sin(\theta/2)\ket{1}$. For each $j \in [n]$, we use $\hat{\theta_j}$ to measure the angle that the $j$th qubit is to the nearest pole, i.e., $\hat{\theta_j} = \min(\theta_j, \pi - \theta_j)$. This work focuses on the restriction that each qubit is at most angle $\theta \in [0,\pi/2]$ away from one of the poles, i.e., $\hat{\theta_j} \leq \theta$ for all $j\in [n]$; we refer to such warm-starts as \emph{within-$\theta$-warm-starts}. In light of Remark \ref{remark:zeroPhase} above, in the context of warm-started QAOA with within-$\theta$-warm-starts with aligned mixers, it suffices to only consider warm-starts of the form $\ket{\psi_i} = \bigotimes_{j=1}^n \ket{\theta_j}$ where $\ket{\theta_j} := \cos(\theta/2)\ket{0}+\sin(\theta/2)\ket{1} \in \textbf{Arc}$ with $\hat{\theta_j} \leq \theta$ for all $j\in [n]$.

In addition, we also consider a subclass of within-$\theta$-warm-starts where each qubit is \emph{exactly} some fixed angle from the poles of the Bloch sphere, i.e., $\theta_j = \theta$ for all $j \in [n]$; we refer to such warm-starts as \emph{at-$\theta$-warm-starts}. As seen in Figure \ref{fig:singleCutInitialization}, this forces each qubit to be in one of two positions on the Bloch sphere; this naturally induces a \emph{corresponding bitstring} $b\in\{0,1\}^n$ with the property that at $\theta = 0$, measuring the warm-start yields exactly the state $\ket{b}$.

\begin{figure}[t]
\centering
    \pgfmathsetmacro{\r}{2.6} %  

\tdplotsetmaincoords{80}{120}
\begin{tikzpicture}[
tdplot_main_coords,
%tdplot_rotated_coords,
font=\footnotesize,
Helpcircle/.style={gray!70!black,
%densely dashed
},
]

\pgfmathsetmacro{\h}{0.9*\r} %

\pgfmathsetmacro{\t}{60}
\coordinate[label=left:{$\ket{\theta}$}] (X1) at ({\r*sin(\t)},0,{\r*cos(\t)});
\coordinate[label=left:{$\ket{\pi \!\! - \!\! \theta}$}] (X2) at ({\r*sin(\t)},0,{-\r*cos(\t)}); 

\coordinate (M) at (0,0,0);
\coordinate[label=$\ket{0}$] (Top) at (0,0,\r);
\coordinate[label=below:{$\ket{1}$}] (Bot) at (0,0,-\r);
%\coordinate[label=$S$] (S) at (0,0,\h); 
%\coordinate[label=$A$] (A) at (0,\h,0); 

%\draw[Helpcircle] (M) circle[radius=\r];
\tdplotdrawarc{(M)}{\r}{-65}{110}{anchor=north}{}
\tdplotdrawarc[dashed]{(M)}{\r}{110}{295}{anchor=north}{}

%\draw[Helpcircle] (S) circle[radius=sqrt(\r*\r-\h*\h)];% much faster then ^2

\tdplotsetrotatedcoords{90}{90}{0}%
\tdplotdrawarc[tdplot_rotated_coords, blue]{(M)}{\r}{180}{360}{anchor=north}{}
\tdplotdrawarc[tdplot_rotated_coords, dashed]{(M)}{\r}{0}{180}{anchor=north}{}
%\tdplotdrawarc[tdplot_rotated_coords]{(M)}{\r}{180}{360}{anchor=north}{}
\tdplotdrawarc[tdplot_rotated_coords]{(M)}{0.5*\r}{180}{180+\t}{anchor=north}{$\theta$}
\tdplotdrawarc[tdplot_rotated_coords]{(M)}{0.5*\r}{0}{-\t}{anchor=north}{$\theta$}
%\tdplotdrawarc[tdplot_rotated_coords,dashed]{(M)}{\r}{110}{295}{anchor=north}{$\phi$}
%\draw[Helpcircle,tdplot_rotated_coords] (M) circle[radius=\r];
%\draw[Helpcircle, red, tdplot_rotated_coords] (A) circle[radius=sqrt(\r*\r-\h*\h)];
%\draw[] (M) -- (A);
\draw[] (M) -- (X1);
\draw[] (M) -- (X2);
\draw[] (Top) -- (Bot);

% Sphere
\begin{scope}[tdplot_screen_coords, on background layer]
\fill[ball color= gray!20, opacity = 0.1] (M) circle (\r); 
\end{scope}

%% Points
\foreach \P in {X1,X2}{
\shade[ball color=blue] (\P) circle (3pt);
}

\begin{scope}[-latex, shift={(M)}, xshift=1.5*\r cm, yshift=0.1*\r cm]
\foreach \P/\s/\Pos in {(1,0,0)/x/right, (0,1,0)/y/below, (0,0,1)/z/right} 
\draw[] (0,0,0) -- \P node[\Pos, pos=0.9,inner sep=2pt]{$\s$};
\end{scope}

\end{tikzpicture}
\caption{\label{fig:singleCutInitialization} A geometric depiction of the states $\ket{\theta}$ and $\ket{\pi-\theta}$ on the Bloch sphere; these two positions correspond to qubit positions found in at-$\theta$-warm-starts. The blue half-circle, $\textbf{Arc}$, in the $xz$-plane denotes all the possible positions for $\ket{\theta}$ as $\theta$ varies from $0$ to $\pi$. }
\end{figure}
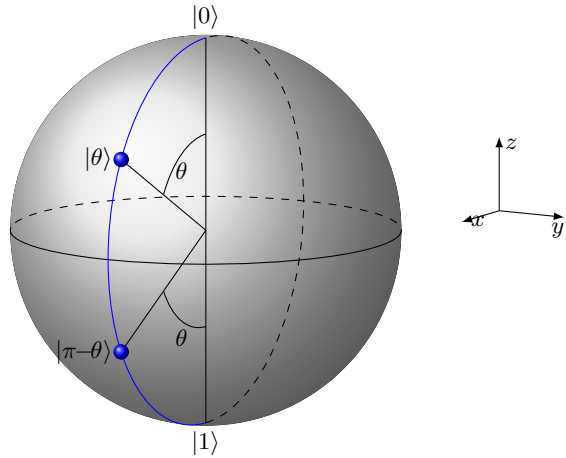

For the \MC{} problem, previous works have considered QAOA with at-$\theta$-warm-starts. Tate et al. \cite{tate2024guarantees} provided some theoretical results for at-$\theta$-warm-starts in the case of single-round Max-Cut QAOA on 3-regular graphs with aligned mixers. Egger et al. \cite{egger2021warm} consider a QAOA-variant (different than their QUBO-relaxation variant) with at-$\theta$-warm-starts with $\theta = \pi/3$, but with an \emph{unaligned} mixer that has the property that there exists parameters $\gamma$ and $\beta$ such that depth-1 QAOA yields the bitstring $b$ corresponding to the warm-start (i.e. it returns exactly $\ket{b}$). Cain et al. \cite{cain2022qaoa} considered Max-Cut QAOA with at-$\theta$-warm-starts with $\theta=0$ with the transverse field mixer; it was found that when the cut associated with the corresponding bitstring $b$ was a ``good" cut, then this variant of QAOA yields little to no improvement regardless of the circuit depth used.

\section{A Toy Problem}
\label{sec:toyProblem}
Before giving the formal result, we first consider a toy problem to give the reader intuition for why one might expect warm-started QAOA to perform poorly if the warm-start is initialized near the poles. This toy problem consists of a single qubit and the classical objective function is given by $c(x) = x$ for $x \in \{0,1\}$. This corresponding cost Hamiltonian is technically $C = \frac{1}{2}(I+Z)$ but we can instead\footnote{A similar replacement technique, obtained by shifting and scaling the cost objective, is discussed in Section \ref{sec:boundsOnRequiredDepth}.} use the simpler Hamiltonian $C = Z$.

For general problems, if the corresponding bitstring $b$ of a at-$\theta$-warm-start is non-optimal, then the QAOA circuit effectively needs to ``correct" certain qubits so that they are on the opposite hemisphere of the Bloch sphere from where they started. For our toy problem, since $\max_x c(x) = 1$, if we initialize with the warmstart $\ket{\psi_i} = \ket{\theta}$ with $\theta$ small, then the QAOA circuit needs to move the qubit from (near the) north pole of the Bloch sphere to the south pole.

For our toy problem, both QAOA unitaries, $U(B_{\ket{\psi_i}}, \beta)$ and $U(C, \gamma) = U(Z, \gamma)$ correspond to single-qubit rotations on the Bloch sphere, with the former rotating about the original qubit position and the latter rotating about the $z$-axis of the Bloch sphere. To simplify matters, we assume that the variational parameters $\gamma$ and $\beta$ are non-zero and are chosen so that the resulting (single-qubit) state stays in the $xz$-plane of the Bloch sphere after each QAOA unitary (i.e. rotation); this assumption puts an upper bound on the required circuit depth needed to achieve some change in approximation ratio since there may perhaps be better choices for $\gamma$ and $\beta$. When $\theta$ is small, it takes several layers of the QAOA circuit to significantly move the qubit away from its original position since the axes of rotation for both QAOA unitaries are so close to another as seen in Figure \ref{fig:toyExample}. In particular, the QAOA circuit evolves the state as follows:
\begin{equation}\ket{\theta} \to \ket{{-\theta}} \to \ket{{3\theta}} \to \ket{{-3\theta}} \to \ket{{5\theta}} \to \cdots .\end{equation}

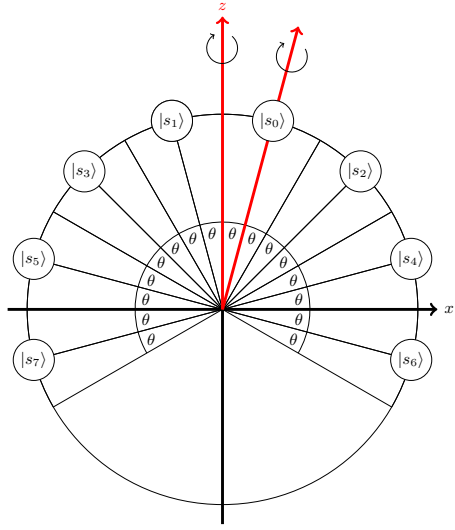
\begin{figure}[t]
\centering
\resizebox{0.5\linewidth}{!}{%
    \begin{tikzpicture}
      \def\n{8} % number of sectors
      \def\r{3.8} % radius of the circle
    \def\a{1.5} % theta label position
    \def\b{120} %corresponds to width of sector
    \def\m{20} % number of small circles

    \draw[->,ultra thick] (-1.1*\r,0)--(1.1*\r,0) node[right]{$x$};
    \draw[-,ultra thick,black] (0,-1.1*\r)--(0,0) ;

      % Draw the circle
      \draw (0,0) circle (\r);
      % \draw (0,0) circle (\a+.2);

      \pgfmathsetmacro{\startangle}{90+(-\n)*\b/\n}
      \pgfmathsetmacro{\endangle}{90+(\n)*\b/\n}
    \draw (\startangle:\a+0.2) arc (\startangle:\endangle:\a+0.2);
    
      % Draw the sectors
      \foreach \i in {1,...,\n} {
        \pgfmathsetmacro{\startangle}{90+(\i-1)*\b/\n}
        \pgfmathsetmacro{\endangle}{90+\i*\b/\n}
        \draw (0,0) -- (\startangle:\r);
        \draw (\startangle:\r) arc (\startangle:\endangle:\r);
        \draw (0,0) -- (\endangle:\r);
        \pgfmathsetmacro{\midangle}{\startangle+0.5*(\endangle-\startangle)}
        \node at (\midangle:\a) {$\theta$};
    }

          % Draw the sectors
      \foreach \i in {1,...,\n} {
        \pgfmathsetmacro{\startangle}{90-(\i-1)*\b/\n}
        \pgfmathsetmacro{\endangle}{90-\i*\b/\n}
        \draw (0,0) -- (\startangle:\r);
        \draw (\startangle:\r) arc (\startangle:\endangle:\r);
        \draw (0,0) -- (\endangle:\r);
        \pgfmathsetmacro{\midangle}{\startangle+0.5*(\endangle-\startangle)}
        \node at (\midangle:\a) {$\theta$};
    }

    %draw red axes of rotation and circular arrows
    \draw[->,ultra thick,red] (0,0)--(0,1.5*\r) node[above]{$z$};
    \draw[->] (88:1.4*\r) arc (50:-240:0.3);
    \pgfmathsetmacro{\axisangle}{90+(-1)*\b/\n}
    \draw[->,ultra thick,red] (0,0)--(\axisangle:1.5*\r);
    \draw[->] (\axisangle-2.6:1.4*\r) arc (30:-240:0.3);

    % Draw small circles along the perimeter
  \foreach \i in {0,2,...,\n} {
    \pgfmathsetmacro{\lab}{int(\i-2)}
    \pgfmathsetmacro{\angle}{90-(\i-1)*\b/\n}
    \draw[black,fill=white] (\angle:\r) circle (0.4);
    \node at (\angle:\r) {$\ket{s_\lab}$};
  }

% Draw small circles along the perimeter
  \foreach \i in {2,4,...,\n} {
    \pgfmathsetmacro{\lab}{int(\i-1)}
    \pgfmathsetmacro{\angle}{90+(\i-1)*\b/\n}
    \draw[black,fill=white] (\angle:\r) circle (0.4);
    \node at (\angle:\r) {$\ket{s_\lab}$};
  }
    \end{tikzpicture}
}
\caption{\label{fig:toyExample} An illustration of the evolution of quantum state as each unitary of the QAOA circuit is applied. The large circle represents the intersection of the Bloch sphere with the $xz$-plane. The two alternating unitaries of the QAOA circuit correspond to a single-qubit rotation, represented by the two red arrows. Starting with the warm-start initial state $\ket{s_0}$, the state $\ket{s_j}$ ($j=0,1,\dots,7$) represents the quantum state after $j$ unitary operations of the QAOA; when $j$ is even, then $\ket{s_j}$ represents the result of a depth-$p$ QAOA circuit with $p=j/2$.}
\end{figure}

In general, after $p$ layers of the QAOA circuit, the resulting quantum state will be $\ket{{(2p+1)\theta}}$. Since $\max_x c(x)=1$, then the resulting initial and final approximation ratios are $\lambda_i = \sin^2(\theta / 2)$ and $\lambda_f = \sin^2((2p+1)\theta/2)$ respectively.

For very small values of $\theta$, $\lambda_i \approx 0$, and thus the change in approximation is approximately,
\begin{equation}\label{eqn:toyExample}\Delta \lambda \approx \lambda_f = \sin^2((2p+1)\theta/2).\end{equation}

If $\Delta \lambda$ is held constant, then the right-hand side of Equation \ref{eqn:toyExample} is also constant, which can only occur if $2p+1$ and $\theta$ are inversely proportional to one another, i.e., the required circuit depth scales as $p = O(1/\theta)$ for small $\theta$.

While the toy example gives a geometric intuition for why small initialization angles are bad, this intuition is difficult to directly generalize for multi-qubit systems since the Bloch sphere is not suitable for modeling entangled systems. Instead, we consider a more sophisticated approach in the next section, allowing us to obtain \emph{lower} bounds on the required circuit depth needed to achieve certain changes in approximation ratio.

\section{Bounds on Required Depth for Small Initialization Angle}
\label{sec:boundsOnRequiredDepth}

Recent work by Benchasattabuse et al. \cite{benchasattabuse2023lower} provide lower bounds on the number of rounds of QAOA that are required to obtain guaranteed approximation ratios; moreover, such bounds are applicable across a wide variety of optimization problems, phase separators, and mixing Hamiltonians. We now analyze such bounds in the context of warm-started QAOA and eventually show (Theorem \ref{thm:warmStartDepthBound}) that for a desired change of approximation ratio $\Delta\lambda$, the required number of rounds scales roughly as $\Omega(\Delta \lambda/\theta)$ for small $\theta$ approaching zero.

Benchasattabuse et al. \cite{benchasattabuse2023lower} obatined the QAOA circuit-depth bounds by utilizing one of the lower bounds on quantum annealing times found by Garc\'ia-Pintos et al. \cite{garcia2023lower} which we summarize below in Theorem \ref{thm:lowerBoundAnnealingTime}.

\begin{theorem}[Garc\'ia-Pintos et al. \cite{garcia2023lower}]\label{thm:lowerBoundAnnealingTime}
    Let $H(t) = (1-g(t))H_0 + g(t)H_1$ be a quantum annealing annealing schedule where $g(0) = 0$ and $g(t_f) = 1$, $t_f$ is the total annealing time, and $H_0$ and $H_1$ are Hamiltonians with zero ground state energies. For $t \in [0,t_f]$, let $\ket{\psi_t}$ be the quantum state at time $t$ of the annealing process. If $\ket{\psi_0}$ is a ground state of $H_0$, then the following inequality holds:
    \begin{equation}t_f \geq \frac{\langle H_0 \rangle_{t_f} + \langle H_1 \rangle_0 - \langle H_1 \rangle_{t_f}}{\Vert [H_1, H_0]\Vert},\end{equation}
    where for any appropriately-sized Hamiltonian $H$ and $t \in [0,t_f]$,
    $\langle H \rangle_t := \bra{\psi_t} H \ket{\psi_t}.$
\end{theorem}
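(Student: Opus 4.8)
The plan is to track how the instantaneous expectation values $\langle H_0 \rangle_t$ and $\langle H_1 \rangle_t$ evolve along the annealing trajectory and then integrate these rates over $[0, t_f]$. Setting $\hbar = 1$, the state obeys the Schr\"odinger equation $\frac{d}{dt}\ket{\psi_t} = -i H(t)\ket{\psi_t}$, from which the Ehrenfest-type identity $\frac{d}{dt}\langle A \rangle_t = i\langle [H(t), A]\rangle_t$ follows for any time-independent Hermitian $A$ by differentiating $\bra{\psi_t}A\ket{\psi_t}$ and invoking Hermiticity of $H(t)$. Applying this to $A = H_0$ and $A = H_1$, and using $H(t) = (1-g(t))H_0 + g(t)H_1$ together with $[H_0,H_0] = [H_1,H_1] = \mathbf{0}$, I would obtain $\frac{d}{dt}\langle H_0\rangle_t = i\, g(t)\,\langle [H_1,H_0]\rangle_t$ and $\frac{d}{dt}\langle H_1\rangle_t = -i\,(1-g(t))\,\langle [H_1,H_0]\rangle_t$, so that both rates are proportional to the single quantity $\langle [H_1,H_0]\rangle_t$.

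Next I would integrate. Since $\ket{\psi_0}$ is a ground state of $H_0$ with zero ground energy, $\langle H_0\rangle_0 = 0$, so $\langle H_0\rangle_{t_f} = \int_0^{t_f} i\,g(t)\,\langle[H_1,H_0]\rangle_t\,dt$, while $\langle H_1\rangle_0 - \langle H_1\rangle_{t_f} = \int_0^{t_f} i\,(1-g(t))\,\langle[H_1,H_0]\rangle_t\,dt$. The crux of the argument is that the particular combination appearing in the numerator is engineered so that the schedule weights $g(t)$ and $1-g(t)$ sum to one; adding the two integrals gives the schedule-independent identity
\begin{equation}\langle H_0\rangle_{t_f} + \langle H_1\rangle_0 - \langle H_1\rangle_{t_f} = i\int_0^{t_f} \langle[H_1,H_0]\rangle_t\,dt.\end{equation}

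Finally I would bound the right-hand side. Because $[H_1,H_0]$ is anti-Hermitian, $i[H_1,H_0]$ is Hermitian and the integrand is real; taking absolute values and using the triangle inequality for integrals followed by the elementary bound $|\bra{\psi_t}M\ket{\psi_t}| \leq \Vert M\Vert$ for unit vectors (with $M = [H_1,H_0]$) yields $|\langle H_0\rangle_{t_f} + \langle H_1\rangle_0 - \langle H_1\rangle_{t_f}| \leq t_f \Vert[H_1,H_0]\Vert$. Dividing through and discarding the absolute value via $x \leq |x|$ gives the claimed inequality. I expect the only real subtlety to be conceptual rather than computational: recognizing \emph{why} this specific linear combination of endpoint expectation values is the right one, namely that it is precisely the combination for which the unknown schedule $g(t)$ cancels, leaving an integrand whose magnitude is uniformly controlled by $\Vert[H_1,H_0]\Vert$ independent of the annealing path. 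Note that the zero-ground-energy hypothesis on $H_1$ is not actually needed for the inequality itself (only $\langle H_0\rangle_0 = 0$ is used); it serves to render the energy expectations nonnegative excess energies for the downstream QAOA application.
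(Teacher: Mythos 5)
The paper does not prove Theorem~\ref{thm:lowerBoundAnnealingTime} at all; it is imported verbatim from Garc\'ia-Pintos et al.\ \cite{garcia2023lower} as an external result. Your derivation is correct and is essentially the argument given in that reference: the Ehrenfest identity $\frac{d}{dt}\langle A\rangle_t = i\langle[H(t),A]\rangle_t$ applied to $A=H_0$ and $A=H_1$, the observation that the combination $\langle H_0\rangle_{t_f}+\langle H_1\rangle_0-\langle H_1\rangle_{t_f}$ is exactly the one for which the schedule weights $g(t)$ and $1-g(t)$ sum to one and cancel, and the uniform bound $|\langle[H_1,H_0]\rangle_t|\leq\Vert[H_1,H_0]\Vert$. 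Your side remark that the zero-ground-energy hypothesis on $H_1$ is not used (only $\langle H_0\rangle_0=0$ enters) is also accurate. The only points worth flagging are trivial: the bound presupposes $\Vert[H_1,H_0]\Vert\neq 0$, and for the bang-bang schedules relevant to QAOA the schedule $g$ is only piecewise constant, so the Ehrenfest identity should be applied piecewise before integrating --- neither affects the conclusion.
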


Benchasattabuse et al. \cite{benchasattabuse2023lower} are able to translate these bounds on annealing times into bounds on QAOA circuit depth by using the fact that QAOA can be interpreted as quantum annealing with a ``bang-bang" schedule, i.e., $H_1$ and $H_0$ correspond to the phase separator and mixing Hamiltonian respectively, the annealing schedule is such that $g(t) \in \{0,1\}$ for all $t \in [0,t_f]$, and the annealing time is $\sum_{j=1}^p (|\beta_j|+|\gamma_j|)$.
\begin{remark}
    The above paragraph should \textbf{not} be misinterpreted to mean that the results of this paper only work for QAOA with specific parameter choices (e.g. choosing $\gamma$ and $\beta$ in a way corresponding to a linear annealing schedule). The results in this paper hold for \emph{any} choice of $\gamma$ and $\beta$ parameters.
\end{remark}

The QAOA circuit-depth bounds obtained by Benchasattabuse et al. (Theorem 2 of \cite{benchasattabuse2023lower}) assume that the initial state is the equal superposition state $\ket{+}^{\otimes n}$. It is not too difficult to adjust the bounds (and the corresponding proof) to account for arbitrary initial states. We state the adjusted bounds in Theorem \ref{thm:generalQAOADepthBounds} and prove that such an adjustment is correct.

\begin{theorem}
\label{thm:generalQAOADepthBounds}
    Given a classical objective function $c(x)$ for a maximization task, represented by the Hamiltonian $C$, encoded into a phase separator Hamiltonian $H_1 = c_\text{max} I - C$ and a mixing Hamiltonian $H_0$, where all Hamiltonians are $2\pi$ periodic with zero ground state energies. Let $c_\text{max}$ denote the global maximum of $c(x)$. Let $\ket{\psi_i}$ be a ground state of $H_0$ such that measurement of $\ket{\psi_i}$ yields an approximation ratio of $\lambda_i$.  If a QAOA protocol with $p$ rounds driven by $H_0$ and $H_1$ that starts from $\ket{\psi_i}$ and reaches a state $\ket{\psi_f}$ with approximation ratio $\lambda_f$, then,
    \begin{equation}p \geq \frac{\bra{\psi_f}H_0\ket{\psi_f} + \Delta\lambda \cdot c_\text{max}  }{4\pi \Vert [C, H_0]\Vert},\end{equation}
    where $\Delta \lambda = \lambda_f - \lambda_i$.
\end{theorem}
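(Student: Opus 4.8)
The plan is to realize the depth-$p$ QAOA circuit as a bang-bang quantum annealing schedule and then apply the Garc\'ia-Pintos bound (Theorem~\ref{thm:lowerBoundAnnealingTime}) with $H_0$ the mixer and $H_1$ the phase separator. Concretely, I would set $\ket{\psi_0} := \ket{\psi_i}$, which by hypothesis is a ground state of $H_0$, and $\ket{\psi_{t_f}} := \ket{\psi_f}$, with the total annealing time being $t_f = \sum_{j=1}^p(|\beta_j| + |\gamma_j|)$. The hypotheses of Theorem~\ref{thm:lowerBoundAnnealingTime} are met because both $H_1 = c_\text{max} I - C$ and $H_0$ are assumed to have zero ground-state energy and $\ket{\psi_i}$ is a ground state of $H_0$. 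This ground-state requirement is precisely the only place where the original argument of Benchasattabuse et al. used the equal-superposition state, so replacing $\ket{+}^{\otimes n}$ by an arbitrary mixer ground state $\ket{\psi_i}$ requires no further change to the derivation.

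Next I would evaluate the three expectation terms in the numerator of Theorem~\ref{thm:lowerBoundAnnealingTime}. The mixer term is immediate: $\langle H_0\rangle_{t_f} = \bra{\psi_f} H_0 \ket{\psi_f}$. For the phase-separator terms, substituting $H_1 = c_\text{max} I - C$ and using $\bra{\psi}C\ket{\psi} = \lambda \cdot c_\text{max}$ for the respective approximation ratios gives $\langle H_1\rangle_0 = (1 - \lambda_i)c_\text{max}$ and $\langle H_1\rangle_{t_f} = (1-\lambda_f)c_\text{max}$, so that $\langle H_1\rangle_0 - \langle H_1\rangle_{t_f} = (\lambda_f - \lambda_i)c_\text{max} = \Delta\lambda \cdot c_\text{max}$. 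Finally, since the constant shift commutes with everything, $[H_1, H_0] = -[C, H_0]$ and hence $\Vert[H_1,H_0]\Vert = \Vert[C,H_0]\Vert$. Together these turn Theorem~\ref{thm:lowerBoundAnnealingTime} into
\begin{equation*}
t_f \geq \frac{\bra{\psi_f}H_0\ket{\psi_f} + \Delta\lambda\cdot c_\text{max}}{\Vert[C,H_0]\Vert}.
\end{equation*}

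The last step is to convert this annealing-time bound into a circuit-depth bound. Here I would use the stated $2\pi$-periodicity of the Hamiltonians: since reducing every $\beta_j$ and $\gamma_j$ modulo $2\pi$ (say, into $[0,2\pi)$) leaves all the unitaries unchanged, and hence leaves both $\ket{\psi_f}$ and $p$ unchanged, I may assume $|\beta_j|, |\gamma_j| \le 2\pi$, which yields $t_f = \sum_{j=1}^p (|\beta_j| + |\gamma_j|) \le 4\pi p$. Chaining this with the displayed inequality and solving for $p$ gives the claimed bound.

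I expect the main obstacle to be bookkeeping around normalizations rather than any deep step: one must ensure the zero-ground-energy convention is consistently applied to $H_0$ (so that $\bra{\psi_f}H_0\ket{\psi_f}$ really is the mixer energy measured above its ground state, which is what keeps the numerator nonnegative and the bound meaningful), and one must justify the factor $4\pi$ through the $[0,2\pi)$ periodicity reduction rather than the tighter $[-\pi,\pi)$ one. The conceptual content — that Garc\'ia-Pintos applies verbatim once $\ket{\psi_i}$ is any mixer ground state — is exactly the observation that the equal-superposition assumption in Benchasattabuse et al. was inessential.
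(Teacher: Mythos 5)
Your proposal is correct and follows essentially the same route as the paper: the paper's proof simply observes that the argument of Benchasattabuse et al.\ goes through verbatim with $\bra{\psi_i}H_1\ket{\psi_i}-\bra{\psi_f}H_1\ket{\psi_f}$ in the numerator for any mixer ground state and then checks this equals $\Delta\lambda\cdot c_\text{max}$, whereas you unfold that cited argument explicitly (Garc\'ia-Pintos bound, bang-bang schedule, $[H_1,H_0]=-[C,H_0]$, and $t_f\le 4\pi p$ via the $2\pi$-periodicity reduction). All steps check out, including the deliberately loose $[0,2\pi)$ reduction that reproduces the paper's $4\pi$ constant.
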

\begin{proof}
    In the proof by \cite{benchasattabuse2023lower}, they show that the above bound holds if $\Delta \lambda \cdot c_\text{max}$ is instead replaced with $\bra{\psi_i}H_1\ket{\psi_i} - \bra{\psi_f} H_1 \ket{\psi_f}$; this is true even if $\ket{\psi_i} \neq \ket{+}^{\otimes n}$. However, these two expressions are equal:
    \begin{align}\bra{\psi_i}H_1\ket{\psi_i} - \bra{\psi_f} H_1 \ket{\psi_f} &= \bra{\psi_f}C\ket{\psi_f} - \bra{\psi_i} C \ket{\psi_i} \\
    &= c_\text{max}\lambda_f- c_\text{max}\lambda_i \\
    &= \Delta \lambda \cdot c_\text{max},\end{align}
    hence showing that this adjusted version of the bound holds.

    The original bound,
    
    \begin{equation}p \geq \frac{\bra{\psi_f}H_0\ket{\psi_f} + c_\text{max}\lambda_f - c_\text{avg}  }{4\pi \Vert [C, H_0]\Vert}.\end{equation}
    
    in the statement of Theorem 2 of \cite{benchasattabuse2023lower}, which assumes that $\ket{\psi_i} = \ket{+}^{\otimes n}$, can be obtained from our adjusted bounds by observing that, \begin{equation}\lambda_i c_\text{max} = \frac{\bra{+}^{\otimes n}C\ket{+}^{\otimes n}}{c_\text{max}}c_\text{max} = c_\text{avg}.\end{equation}
\end{proof}

For convenience, we consider a looser-version of the bound in Corollary \ref{thm:looserDepthBound} and use $p_\text{min}$ to denote the value of this looser bound.

\begin{corollary}
\label{thm:looserDepthBound}
    The bound in the statement of Theorem \ref{thm:generalQAOADepthBounds} can be replaced with
    \begin{equation}p \geq p_\text{min} := \frac{\Delta \lambda \cdot c_\text{max}}{4\pi \Vert [C, H_0]\Vert}.\end{equation}
\end{corollary}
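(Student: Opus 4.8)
The plan is to derive the looser bound directly from the inequality already established in Theorem \ref{thm:generalQAOADepthBounds} by discarding a provably non-negative term in the numerator. The only fact I need beyond the theorem itself is the hypothesis that the mixing Hamiltonian $H_0$ has zero ground state energy.

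First I would observe that ``zero ground state energy'' means the smallest eigenvalue of $H_0$ is $0$, and since $H_0$ is Hermitian this makes $H_0$ positive semidefinite. Consequently, for the final QAOA state $\ket{\psi_f}$ we have the elementary bound $\bra{\psi_f}H_0\ket{\psi_f} \geq 0$, as the expectation value of a positive semidefinite operator is non-negative for any state.

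Next I would chain this with the theorem's conclusion. Because $\Vert [C, H_0]\Vert > 0$, the map $x \mapsto (x + \Delta\lambda \cdot c_\text{max})/(4\pi \Vert [C, H_0]\Vert)$ is monotonically increasing in $x$, so replacing $\bra{\psi_f}H_0\ket{\psi_f}$ by its lower bound $0$ can only decrease the right-hand side. This gives the chain
\begin{equation}
p \;\geq\; \frac{\bra{\psi_f}H_0\ket{\psi_f} + \Delta\lambda \cdot c_\text{max}}{4\pi \Vert [C, H_0]\Vert} \;\geq\; \frac{\Delta\lambda \cdot c_\text{max}}{4\pi \Vert [C, H_0]\Vert} \;=\; p_\text{min},
\end{equation}
which is exactly the claimed inequality.

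There is essentially no obstacle here; this corollary is a one-line weakening of the preceding theorem, and the only thing to verify carefully is that $H_0$ being a mixing Hamiltonian with zero ground state energy genuinely forces positive semidefiniteness (so that the dropped term has the correct sign). I would flag this as the single point worth stating explicitly, since the whole argument collapses if $H_0$ could have negative eigenvalues.
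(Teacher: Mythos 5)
Your proof is correct and matches the paper's own argument exactly: both drop the term $\bra{\psi_f}H_0\ket{\psi_f}$ from the numerator of the bound in Theorem \ref{thm:generalQAOADepthBounds}, justified by the fact that $H_0$ has zero ground state energy and hence non-negative expectation in any state. Your additional remark about positive semidefiniteness is a fair elaboration of the same one-line observation.
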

\begin{proof}
    This follows from Theorem \ref{thm:generalQAOADepthBounds} and that $\bra{\psi_f} H_0 \ket{\psi_f} \geq 0$ (as $H_0$ is assumed to have a zero ground state energy).
\end{proof}

Before continuing, we point out certain important conditions in Theorem \ref{thm:generalQAOADepthBounds}. First, all of the mixers need to be $2\pi$ periodic. In particular, this will hold for the cost Hamiltonian if the corresponding classical objective function $c: \{0,1\}^n \to \mathbb{R}$ has integer objective values, i.e., $c(x) \in \mathbb{Z}$ for all $x \in \{0,1\}^n$. Now consider the transverse field mixer $B_\text{TF}$ and a warm-start mixer $B_{\ket{\psi_i}}$ that is aligned with the warm-start $\ket{\psi_i}$. It can be shown that both of these are $2\pi$ periodic; however, Theorem \ref{thm:generalQAOADepthBounds} also requires that the mixers also have zero energy which is not the case for $B_\text{TF}$ and $B_{\ket{\psi_i}}$. In particular, for \emph{any} product state $\ket{s}$, the mixer $B_{\ket{s}}$ has energies ranging from $-n$ to $n$ (see \cite{Tate2023warmstartedqaoa} for a proof); however, we can shift the mixers to have energies ranging from $0$ to $n$ as so: $\hat{B}_{\ket{s}} := \frac{1}{2}(nI - B_{\ket{s}}).$
Choosing $\ket{s}$ appropriately, we have
$\hat{B}_\text{TF} := \frac{1}{2}(nI -B_\text{TF})$ and  $\hat{B}_{\ket{\psi_i}} := \frac{1}{2}(nI - B_{\ket{\psi_i}}),$
where $\hat{B}_\text{TF}$ and $\hat{B}_{\ket{\psi_i}}$ have energies ranging from 0 to $n$. Moreover, now $\ket{+}^{\otimes n}$ and $\ket{\psi_i}$ are \emph{ground} states of $\hat{B}_\text{TF}$ and $B_{\ket{\psi_i}}$ respectively as required by Theorem \ref{thm:generalQAOADepthBounds}. With some appropriate rescaling of the variational parameters $\gamma$ and $\beta$, one can show that replacing $B_\text{TF}$ with $\hat{B}_\text{TF}$ (or $B_{\ket{\psi_i}}$ with $\hat{B}_{\ket{\psi_i}}$) in the QAOA circuit will result in the same final state. Nonetheless, it is important to consider the versions of these mixers with zero-ground state energy as the commutator term $[C, H_0]$ in Theorem \ref{thm:generalQAOADepthBounds} and Corollary \ref{thm:looserDepthBound} assume the mixer is written in such a form. The following lemma will be helpful in regards to analyzing such commutators.

%We are now ready to present our main result that provides a lower bound on the required number of rounds that are necessary for warm-started QAOA to achieve a desired change in approximation ratio.

\begin{lemma}
\label{thm:commutatorRelation}
    Let $\ket{\psi_i} = \bigotimes_{j=1}^n \ket{\theta_j} = \bigotimes_{j=1}^n \ket{\vec{n}_j}$ be a within-$\theta$-warm-start. Then, $[B_{\vec{n}_j,j}, C] = \sin(\hat{\theta}_j)[X_j, C],$ where $\hat{\theta_j}$ is the angle that the $j$th qubit is away from one of the poles of the Bloch sphere (defined in Section \ref{sec:constructionOfWarmStartedStates}).
\end{lemma}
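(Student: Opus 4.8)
The plan is to exploit two structural facts: that each qubit of the warm-start lies in the $xz$-plane of the Bloch sphere (so its Bloch vector has vanishing $y$-component), and that the cost Hamiltonian $C$ is diagonal in the computational basis. By Remark~\ref{remark:zeroPhase}, it suffices to consider warm-starts of the form $\ket{\theta_j} = \cos(\theta_j/2)\ket{0} + \sin(\theta_j/2)\ket{1} \in \textbf{Arc}$, so the reduction to zero azimuthal phase is already in hand and guarantees the $y$-component of $\vec{n}_j$ is zero.

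First I would write out the single-qubit mixer explicitly. Since $\ket{\theta_j}$ lies in the Arc, its Bloch vector is $\vec{n}_j = (\sin\theta_j, 0, \cos\theta_j)$, and so by the definition $B_{\vec{n}} = xX + yY + zZ$ we obtain $B_{\vec{n}_j} = \sin\theta_j\, X + \cos\theta_j\, Z$, hence $B_{\vec{n}_j, j} = \sin\theta_j\, X_j + \cos\theta_j\, Z_j$. Next, by bilinearity of the commutator,
\begin{equation}
[B_{\vec{n}_j, j}, C] = \sin\theta_j\,[X_j, C] + \cos\theta_j\,[Z_j, C].
\end{equation}
The crucial observation is that $C$ is diagonal (by definition $C\ket{b} = c(b)\ket{b}$) and $Z_j$ is diagonal, so the two matrices commute and $[Z_j, C] = \mathbf{0}$. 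This kills the second term, leaving $[B_{\vec{n}_j, j}, C] = \sin\theta_j\,[X_j, C]$.

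Finally, I would replace $\sin\theta_j$ with $\sin\hat{\theta}_j$. Since $\hat{\theta}_j = \min(\theta_j, \pi - \theta_j)$ and sine is symmetric about $\pi/2$, i.e. $\sin(\pi - \theta_j) = \sin\theta_j$, we have $\sin\hat{\theta}_j = \sin\theta_j$ regardless of which hemisphere the $j$th qubit sits in, yielding $[B_{\vec{n}_j, j}, C] = \sin(\hat{\theta}_j)[X_j, C]$ as claimed. The computation is short and presents no genuine obstacle; the only points worth emphasizing are the appeal to Remark~\ref{remark:zeroPhase} that eliminates the $Y_j$ contribution entirely, and the diagonality argument that eliminates the $Z_j$ contribution. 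The closed-form symmetry identity in the last step is what makes the bound clean, since it shows the relevant coefficient depends only on the distance $\hat{\theta}_j$ to the nearest pole rather than on the raw polar angle.
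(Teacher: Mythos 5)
Your proof is correct and follows essentially the same route as the paper's: write $\vec{n}_j = (\sin\theta_j, 0, \cos\theta_j)$ so that $B_{\vec{n}_j,j} = \sin\theta_j X_j + \cos\theta_j Z_j$, kill the $Z_j$ term via $[Z_j,C]=\mathbf{0}$ (which you justify slightly more explicitly via diagonality of $C$), and finish with $\sin\theta_j = \sin\hat{\theta}_j$ from $\sin(x)=\sin(\pi-x)$.
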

\begin{proof}
     Recall that $B_{\ket{\psi_i}} = \sum_{j=1}^n B_{\vec{n}_j,j},$
    where
    $B_{\vec{n}_j, j} = x_jX_j+y_jY_j+z_jZ_j,$
    where $\vec{n}_j = (x_j, y_j, z_j)$ is the position of the $j$th qubit in Cartesian coordinates. Given $\ket{\psi_i}=\bigotimes_{j=1}^n \ket{\theta_j}$, we can calculate the $j$th qubit position as:
    \begin{equation}\vec{n}_j = (\sin(\theta_j), 0, \cos(\theta_j)) 
    \end{equation}
    and thus, we have that, $B_{\vec{n}_j} = \sin(\theta)X_j + \cos(\theta)Z_j.$ Since $[Z_j, C] = 0$ for all $j\in [n]$, we have that $[B_{\vec{n}_j,j}, C] = \sin(\theta_j)[X_j, C].$
    
    To complete the proof, it suffices to show that $\sin(\theta_j) = \sin(\hat{\theta}_j)$; this can be seen by recalling the identity $\sin(x) = \sin(\pi-x)$ and the fact that $\hat{\theta_j} = \min(\theta_j, \pi-\theta_j)$.
\end{proof}

 Using the commutator relation in Lemma \ref{thm:commutatorRelation}, we now present our main result which places a lower bound on the required number of angles for warm-started QAOA with within-$\theta$-warm-starts.

\begin{theorem}
\label{thm:warmStartDepthBoundVaryingAngles}
    Fix a phase separator $H_1 = c_\text{max}I -C$ (where $C$ is a Hamiltonian that corresponds to some classical objective function $c(x)$), choice of $\Delta \lambda$, and a within-$\theta$-warm-start $\ket{\psi_i}$. Let  $p_\text{min}(\hat{B}_{\ket{\psi_i}})$ correspond to the bound $p_\text{min}$ in Corollary \ref{thm:looserDepthBound} where $H_0$ is replaced with $H_0 = \hat{B}_{\ket{\psi_i}}$. Then,
    \begin{equation}p_\text{min}(\hat{B}_{\ket{\psi_i}}) \geq \frac{\Delta \lambda}{\sin (\theta)} \mathcal{F}(c),\end{equation}
    where $\mathcal{F}(c)$ is a function that depends on the classical objective $c$ and is independent of $\Delta \lambda, \ket{\psi_i},$ and $\theta$.
\end{theorem}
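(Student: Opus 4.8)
The plan is to start from the looser bound in Corollary~\ref{thm:looserDepthBound} specialized to $H_0 = \hat{B}_{\ket{\psi_i}}$, and reduce the entire statement to a single estimate on the commutator norm $\Vert [C, \hat{B}_{\ket{\psi_i}}]\Vert$ that appears in the \emph{denominator} of $p_\text{min}$. First I would use $\hat{B}_{\ket{\psi_i}} = \tfrac{1}{2}(nI - B_{\ket{\psi_i}})$ together with the fact that $nI$ commutes with everything to write $[C, \hat{B}_{\ket{\psi_i}}] = -\tfrac{1}{2}[C, B_{\ket{\psi_i}}]$, so that $\Vert [C, \hat{B}_{\ket{\psi_i}}]\Vert = \tfrac{1}{2}\Vert [C, B_{\ket{\psi_i}}]\Vert$.

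Next I would expand $B_{\ket{\psi_i}} = \sum_{j=1}^n B_{\vec{n}_j, j}$ and apply Lemma~\ref{thm:commutatorRelation} term by term, which gives $[C, B_{\vec{n}_j,j}] = \sin(\hat{\theta}_j)\,[C, X_j]$ and hence $[C, B_{\ket{\psi_i}}] = \sum_{j=1}^n \sin(\hat{\theta}_j)\,[C, X_j]$. Applying the triangle inequality for the operator norm, followed by the monotonicity of $\sin$ on $[0,\pi/2]$ and the within-$\theta$ hypothesis $\hat{\theta}_j \leq \theta \leq \pi/2$ (so that $\sin(\hat{\theta}_j) \leq \sin(\theta)$), yields the key upper bound $\Vert [C, B_{\ket{\psi_i}}]\Vert \leq \sin(\theta)\sum_{j=1}^n \Vert [C, X_j]\Vert$, and therefore $\Vert [C, \hat{B}_{\ket{\psi_i}}]\Vert \leq \tfrac{1}{2}\sin(\theta)\sum_{j=1}^n \Vert [C, X_j]\Vert$.

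The decisive observation is that because this commutator norm sits in the denominator of $p_\text{min}$, an \emph{upper} bound on it translates directly into a \emph{lower} bound on $p_\text{min}$. Substituting the estimate gives
\[
p_\text{min}(\hat{B}_{\ket{\psi_i}}) = \frac{\Delta\lambda \cdot c_\text{max}}{4\pi \Vert[C,\hat{B}_{\ket{\psi_i}}]\Vert} \geq \frac{\Delta\lambda}{\sin\theta}\cdot \frac{c_\text{max}}{2\pi \sum_{j=1}^n \Vert[C,X_j]\Vert},
\]
so I would read off $\mathcal{F}(c) = c_\text{max}\big/\big(2\pi \sum_{j=1}^n \Vert [C, X_j]\Vert\big)$, which manifestly depends only on $c$ (through $c_\text{max}$ and the single-qubit commutators $[C,X_j]$) and is independent of $\Delta\lambda$, $\ket{\psi_i}$, and $\theta$, as required.

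The main subtlety to guard against is precisely this inequality-direction flip: one must bound the commutator norm from \emph{above} to obtain a valid lower bound, and the triangle inequality fortunately supplies exactly an upper bound. The only other point requiring care is verifying that $\sin$ is monotone increasing on the interval where both $\hat{\theta}_j$ and $\theta$ live, namely $[0,\pi/2]$; this is guaranteed by the definition $\hat{\theta}_j = \min(\theta_j, \pi-\theta_j) \leq \pi/2$ and the standing restriction $\theta \in [0,\pi/2]$. Everything else is a routine substitution.
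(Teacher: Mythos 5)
Your proposal is correct and follows essentially the same route as the paper's own proof: reduce to the commutator norm via $[\hat{B}_{\ket{\psi_i}},C]=-\tfrac12[B_{\ket{\psi_i}},C]$, expand the mixer qubit-by-qubit using Lemma~\ref{thm:commutatorRelation}, bound by the triangle inequality with $\sin(\hat\theta_j)\le\sin(\theta)$, and define $\mathcal{F}(c)=c_\text{max}\big/\bigl(2\pi\sum_{j=1}^n\Vert[X_j,C]\Vert\bigr)$, which is exactly the paper's choice. Your explicit justification of the monotonicity of $\sin$ on $[0,\pi/2]$ is a small point the paper leaves implicit, but otherwise the arguments coincide.
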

\begin{proof}

    Recall from Remark \ref{remark:zeroPhase}, that we may assume (without loss of generality) that $\ket{\psi_i}$ has the form $\ket{\psi_i} = \bigotimes_{j=1}^n \ket{\theta_j}$ where $\hat{\theta_j}$ (the angle that the $j$th qubit of $\ket{\psi_i}$ is from the nearest pole on the Bloch sphere) lies in the interval $[0,\theta]$.
     % Recall from Remark \ref{remark:zeroPhase}, that we may assume that that we can replace the initial state with a different product state where each qubit lies along $\mathbf{Arc}$ (see Equation \ref{eqn:arc}) without affecting the output states of QAOA (up to a global phase) and without effecting the original polar angles of each qubit on the Bloch sphere, so without loss of generality, we assume that $\ket{\psi_i}$ is such a state. Because the polar angles of each qubit are preserved, we may still assume that $\ket{\psi_i}$ is a within-$\theta$-warm-start. For each $j \in [n]$, let $\hat{\theta_j}$ be the angle that each qubit is from one of the poles (see Equation \ref{eqn:angleFromPoles}). 
     
     Now, recall that $B_{\ket{\psi_i}} = \sum_{j=1}^n B_{\vec{n}_j,j},$
     where $\vec{n}_j = (x_j, y_j, z_j)$ is the position of the $j$th qubit in Cartesian coordinates. Also recall from Lemma \ref{thm:commutatorRelation} that
     $[B_{\vec{n}_j,j}, C] = \sin(\hat{\theta}_j)[X_j, C]$. Thus, it follows that,
    \begin{equation} [B_{\ket{\psi_i}}, C] = \sum_{j=1}^n [B_{\vec{n}_j,j}, C] 
    =  \sum_{j=1}^n \sin(\hat{\theta}_j) [X_j, C].\end{equation}

    Taking norms on both sides yields,
    $$\Vert [B_{\ket{\psi_i}}, C] = \left\Vert \sum_{j=1}^n \sin(\hat{\theta}_j) [X_j, C] \right\Vert \leq \sum_{j=1}^n \sin(\theta)\left\Vert [X_j, C] \right\Vert = \frac{\sin(\theta)}{\mathcal{F}(c)} \cdot \frac{c_\text{max}}{2\pi},$$

    where,
    \begin{equation}
        \mathcal{F}(c) = \frac{c_\text{max}}{2\pi} \cdot \left(\sum_{j=1}^n \left\Vert [X_j, C] \right\Vert\right)^{-1}.
    \end{equation}

    Using basic commutator properties, it is easy to show that $[\hat{B}_{\ket{\psi_i}}, C] = -\frac{1}{2}[B_{\ket{\psi_i}}, C]$. Finally, from the definition of $p_\text{min}$ in Corollary \ref{thm:looserDepthBound}, we have
    
        $$p_\text{min}(\hat{B}_{\ket{\psi_i}}) = \frac{\Delta \lambda \cdot c_\text{max}}{4\pi \Vert [C, \hat{B}_{\ket{\psi_i}}]\Vert} 
        = \frac{\Delta \lambda \cdot c_\text{max}}{2\pi \Vert [C, B_{\ket{\psi_i}}]\Vert}  
        \geq
         \frac{\Delta \lambda}{\sin (\theta)} \mathcal{F}(c).$$
    
\end{proof}

% \begin{lemma}
% \label{thm:commutatorRelation}
%     The following holds:
%     \begin{equation}-2[\hat{B}_{\ket{b_\theta}}, C] =[B_{\ket{b_\theta}}, C] = \sin(\theta)[B_\text{TF}, C] = -2\sin(\theta)[\hat{B}_\text{TF}, C] .\end{equation}
% \end{lemma}
% \begin{proof}
%     The first and last equality hold as a result of basic commutator properties and $[nI,C] = 0$. It remains to show the second equality . Recall that $B_{\ket{b_\theta}} = \sum_{j=1}^n B_{\vec{n}_j,j},$
%     where
%     $B_{\vec{n}_j, j} = x_jX_j+y_jY_j+z_jZ_j,$
%     where $\vec{n}_j = (x_j, y_j, z_j)$ is the position of the $j$th qubit in Cartesian coordinates. Given $\ket{b_\theta}$, we can calculate the qubit position as:
%     \begin{equation}\vec{n}_j = \begin{cases} (\sin(\theta), 0, +\cos(\theta)),& b_j=0 \\
%     (\sin(\theta), 0, -\cos(\theta)),& b_j=1
%     \end{cases},
%     \end{equation}
%     and thus, we have that, $B_{\vec{n}_j} = \sin(\theta)X_j \pm \cos(\theta)Z_j,$
%     where the sign is dependent on the parity of $b_j$.

%     Since $[Z_j, C] = 0$ for all $j\in [n]$, we have that $[B_{\vec{n}_j,j}, C] = \sin(\theta)[X_j, C].$

%     Thus, it follows that,
%     $$[B_{\ket{b_\theta}}, C] = \sum_{j=1}^n [B_{\vec{n}_j,j}, C] 
%     = \sin(\theta) \sum_{j=1}^n [X_j, C] 
%     = \sin(\theta) [B_\text{TF}, C],$$
%     where the last equality follows from $B_\text{TF} = \sum_{j=1}^n X_j$.
% \end{proof}

Note that for small $\theta$, $\sin(\theta) \approx \theta$, so in such a case, the required number of rounds of warm-started QAOA, as a function of initialization angle $\theta$ and desired change in approximation ratio $\Delta \lambda$, scales as $p = \Omega(\Delta \lambda/\theta)$. In particular, the number of required rounds approaches infinity as $\theta$ approaches 0, which suggests that the method of warm-started QAOA discussed in this work is not recommended for very small choices of $\theta$ if one wishes to obtain an appreciable value of $\Delta \lambda$. 

In the special case of at-$\theta$-warm-starts, where each qubit in the warm-start is \emph{exactly} angle $\theta$ away from one of the poles, we have a slightly stronger result that directly relates the lower bounds on the required number of rounds required for warm-started QAOA vs standard QAOA.

\begin{theorem}
\label{thm:warmStartDepthBound}
    Fix a phase separator $H_1 = c_\text{max}I -C$ (where $C$ is a Hamiltonian that corresponds to some classical objective function $c(x)$), choice of $\Delta \lambda$, and let $\ket{\psi_i}$ be an at-$\theta$-warm-start for some $0 \leq \theta \leq \pi/2$. Let $p_\text{min}(\hat{B}_\text{TF})$ and $p_\text{min}(\hat{B}_{\ket{\psi_i}})$ correspond to the bound $p_\text{min}$ in Corollary \ref{thm:looserDepthBound} where $H_0$ is replaced with $H_0 = \hat{B}_\text{TF}$ and $H_0 = \hat{B}_{\ket{\psi_i}}$ respectively. Then,
    \begin{equation}p_\text{min}(\hat{B}_{\ket{\psi_i}}) = \frac{1}{\sin \theta} \cdot p_\text{min}(\hat{B}_\text{TF}).\end{equation}
\end{theorem}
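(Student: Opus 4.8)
The plan is to exploit the fact that the two bounds $p_\text{min}(\hat{B}_{\ket{\psi_i}})$ and $p_\text{min}(\hat{B}_\text{TF})$ both come from Corollary \ref{thm:looserDepthBound} with the \emph{same} numerator $\Delta\lambda\cdot c_\text{max}$ and the same constant $4\pi$, so that they differ only through the commutator norm in the denominator. First I would write out both bounds and form their ratio, letting the shared numerator cancel, which leaves
\[
\frac{p_\text{min}(\hat{B}_{\ket{\psi_i}})}{p_\text{min}(\hat{B}_\text{TF})} = \frac{\Vert[C,\hat{B}_\text{TF}]\Vert}{\Vert[C,\hat{B}_{\ket{\psi_i}}]\Vert}.
\]
The whole statement thus reduces to showing that the denominator commutator norm is exactly $\sin\theta$ times the numerator one.

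Next I would strip off the zero-ground-state shift. Since $\hat{B}_\text{TF} = \frac{1}{2}(nI - B_\text{TF})$ and $\hat{B}_{\ket{\psi_i}} = \frac{1}{2}(nI - B_{\ket{\psi_i}})$, and $nI$ commutes with $C$, one gets $[C,\hat{B}_\text{TF}] = -\frac{1}{2}[C,B_\text{TF}]$ and $[C,\hat{B}_{\ket{\psi_i}}] = -\frac{1}{2}[C,B_{\ket{\psi_i}}]$. The factor $\frac{1}{2}$ (and the overall sign, which disappears under the norm) occurs in both the numerator and denominator of the ratio and cancels, so it suffices to prove that $\Vert[C,B_{\ket{\psi_i}}]\Vert = \sin\theta\,\Vert[C,B_\text{TF}]\Vert$.

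The crux is an exact operator identity supplied by the at-$\theta$ hypothesis. Because every qubit of an at-$\theta$-warm-start sits \emph{exactly} angle $\theta$ from a pole, we have $\hat{\theta}_j = \theta$ for all $j$, so Lemma \ref{thm:commutatorRelation} gives $[B_{\vec{n}_j,j},C] = \sin(\theta)[X_j,C]$ with the same scalar $\sin\theta$ for every $j$. Summing over $j$ and using $B_\text{TF} = \sum_j X_j$, the common factor pulls out of the sum,
\[
[C,B_{\ket{\psi_i}}] = \sum_{j=1}^n [C,B_{\vec{n}_j,j}] = \sin\theta\sum_{j=1}^n [C,X_j] = \sin\theta\,[C,B_\text{TF}],
\]
and taking norms (with $\sin\theta \ge 0$ since $\theta\in[0,\pi/2]$) yields $\Vert[C,B_{\ket{\psi_i}}]\Vert = \sin\theta\,\Vert[C,B_\text{TF}]\Vert$. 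Substituting into the ratio above gives the claimed factor $1/\sin\theta$.

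The step I expect to carry all the weight is recognizing that the at-$\theta$ condition upgrades the termwise triangle-inequality bound used in Theorem \ref{thm:warmStartDepthBoundVaryingAngles} to an \emph{exact} equality. In the within-$\theta$ setting the angles $\hat{\theta}_j$ may differ across qubits, so the scalars $\sin(\hat{\theta}_j)$ cannot be factored out and one is forced to estimate $\sin(\hat{\theta}_j)\le\sin\theta$ termwise, producing only an inequality. The uniformity $\hat{\theta}_j=\theta$ is precisely what converts the scalar into a single common prefactor, and this is the only place the distinction between at-$\theta$ and within-$\theta$ warm-starts enters; once this observation is in place, no genuine analytic difficulty remains.
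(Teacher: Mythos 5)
Your proposal is correct and follows essentially the same route as the paper: apply Lemma \ref{thm:commutatorRelation} with the uniform angle $\hat{\theta}_j = \theta$ to get the exact identity $[B_{\ket{\psi_i}}, C] = \sin(\theta)[B_\text{TF}, C]$, then feed this into the formula for $p_\text{min}$ from Corollary \ref{thm:looserDepthBound}. You spell out the cancellation of the $\tfrac{1}{2}$ from the hatted mixers and the shared numerator slightly more explicitly than the paper does, and your closing observation about why the at-$\theta$ uniformity upgrades the inequality of Theorem \ref{thm:warmStartDepthBoundVaryingAngles} to an equality matches the paper's intent exactly.
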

\begin{proof}
    Recall from Remark \ref{remark:zeroPhase}, that we may assume (without loss of generality) that $\ket{\psi_i}$ has the form $\ket{\psi_i} = \bigotimes_{j=1}^n \ket{\theta_j}$. Since $\ket{\psi_i}$ was an at-$\theta$-warm-start, then $\hat{\theta_j}$ (the angle that the $j$th qubit of $\ket{\psi_i}$ is from the nearest pole on the Bloch sphere) is equal to $\theta$ for all $j\in[n]$.

    As a result of Lemma \ref{thm:commutatorRelation} we have
     $[B_{\vec{n}_j,j}, C] = \sin(\hat{\theta}_j)[X_j, C] = \sin(\theta)[X_j,C]$. Thus, it follows that,
    \begin{equation} [B_{\ket{\psi_i}}, C] = \sum_{j=1}^n [B_{\vec{n}_j,j}, C] 
    =  \sin(\theta)\sum_{j=1}^n [X_j, C] = \sin(\theta) \cdot [B_\text{TF}, C].\end{equation}
    The result then follows from the commutator relation above and the definition of $p_\text{min}$ in Corollary \ref{thm:looserDepthBound}.
\end{proof}

    One should be careful in interpreting Theorem \ref{thm:warmStartDepthBound}; since $p_\text{min}(\hat{B}_{\ket{\psi_i}}) \geq p_\text{min}(\hat{B}_\text{TF})$, at first glance, it appears that it is always preferable to run standard QAOA. However, this is not necessarily the case. First, the inequality is a result of comparing lower bounds which are not necessarily tight, the relationship between the actual number of required rounds (between standard QAOA and warm-started QAOA) may actually be much different. Second, even if the bounds were tight, this relationship between $p_\text{min}(\hat{B}_{\ket{\psi_i}})$ and $p_\text{min}(\hat{B}_\text{TF})$  assumes that $\Delta \lambda$ fixed; if instead the target approximation ratio $\lambda_f$ is fixed and $\lambda_i$ is allowed to vary, then this relationship does not necessarily hold. In particular, numerical evidence such as those found in \cite{Tate2023warmstartedqaoa} and \cite{egger2021warm}, suggest that for many instances and for a suitable choice of $\theta$, the number of required rounds to achieve a particular target approximation ratio $\lambda_f$ for warm-started QAOA is much less compared to standard QAOA.

    %a slight generilization of the warm-starts where qubits are allowed to have varying angles away from the poles (instead of one fixed angle $\theta$). Such warm-starts will be characterized by a bitstring $b \in \{0,1\}^n$ and a \emph{vector} of angles $\vec{\theta} = (\theta_1, \dots, \theta_n) \in [-\pi/2, \pi/2]^n$; given this bitstring and vector, we define the warm-start as follows:
    % \begin{equation}
    %     \ket{b_{\vec{\theta}}} = \ket{(b_1)_{\theta_1}} \otimes \ket{(b_2)_{\theta_2}} \otimes \cdots \otimes \ket{(b_n)_{\theta_n}},
    % \end{equation}
    % where each $\ket{(b_i)_{\theta_i}}$ is as defined in Equation \ref{eqn:zeroThetaAndOneTheta}.

    %We give a simple example below that illustrates this nuance in Appendix \ref{sec:comparisonAssumingBoundsAreTight}.
    %\stephan{In addition, remind readers that numerical results definitely suggest otherwise, even though these are limited to small $n$.} \tate{Done.}

    \section{Discussion and Conclusion}
    \label{sec:discussionConclusion}

    This work is consistent with preliminary numerical simulations \cite{Tate2023warmstartedqaoa,egger2021warm}  which suggests that there is very little change in the approximation ratio when the initialization angle $\theta$ is near zero; there were previous geometrical intuitions for why this was the case, but this work shows definitively that small choices of initialization angles are not suitable for warm-started QAOA with aligned mixers. 
    
    %While this work only considers warm-starts that are a fixed angle from the poles of the Bloch sphere, we anticipate that this phenomenon more generally applies to the case where each qubit can have varying qubit positions from one another but are still within some angle away from the poles of the Bloch sphere.

    It should be noted our results are fundamentally a consequence of QAOA with \emph{aligned} mixers being able to be interpreted as a quantum-annealing protocol with a bang-bang schedule; for QAOA with non-aligned mixers, it is likely the case that an entirely different type of analysis is necessary to obtain similar lower bounds on the circuit depth.

\begin{credits}
\subsubsection{\ackname} This work was supported by the U.S. Department of Energy through the Los Alamos National Laboratory.
Los Alamos National Laboratory is operated by Triad National Security, LLC, for the National Nuclear
Security Administration of U.S. Department of Energy (Contract No. 89233218CNA000001). %LANL report LA-UR-.

% \subsubsection{\discintname}
% It is now necessary to declare any competing interests or to specifically
% state that the authors have no competing interests. Please place the
% statement with a bold run-in heading in small font size beneath the
% (optional) acknowledgments\footnote{If EquinOCS, our proceedings submission
% system, is used, then the disclaimer can be provided directly in the system.},
% for example: The authors have no competing interests to declare that are
% relevant to the content of this article. Or: Author A has received research
% grants from Company W. Author B has received a speaker honorarium from
% Company X and owns stock in Company Y. Author C is a member of committee Z.
 \end{credits}
%
% ---- Bibliography ----
%
% BibTeX users should specify bibliography style 'splncs04'.
% References will then be sorted and formatted in the correct style.
%
\bibliographystyle{splncs04}
\bibliography{SOFSEM}
\end{document}